\pgfplotsset{compat=1.15}
\pgfplotsset{
    groupplot xlabel/.initial={},
    every groupplot x label/.style={
        at={($({\pgfplots@group@name\space c1r\pgfplots@group@rows.west}|-{\pgfplots@group@name\space c1r\pgfplots@group@rows.outer south})!0.5!({\pgfplots@group@name\space c\pgfplots@group@columns r1.east}|-{\pgfplots@group@name\space c1r\pgfplots@group@rows.outer south})$)},
        anchor=north,
    },
    groupplot ylabel/.initial={},
    every groupplot y label/.style={
            rotate=90,
        at={($({\pgfplots@group@name\space c1r1.north}-|{\pgfplots@group@name\space c1r1.outer
west})!0.5!({\pgfplots@group@name\space c1r\pgfplots@group@rows.south}-|{\pgfplots@group@name\space c1r\pgfplots@group@rows.outer west})$)},
        anchor=south
    },
    execute at end groupplot/.code={%
      \node [/pgfplots/every groupplot x label]
{\pgfkeysvalueof{/pgfplots/groupplot xlabel}};  
      \node [/pgfplots/every groupplot y label] 
{\pgfkeysvalueof{/pgfplots/groupplot ylabel}};  
    }
}
\def\endpgfplots@environment@groupplot{%
    \endpgfplots@environment@opt%
    \pgfkeys{/pgfplots/execute at end groupplot}%
    \endgroup%
}
\title{Pattern-defeating Quicksort}
\author{Orson R. L. Peters\\orsonpeters@gmail.com}
\institute{Leiden University}
\begin{document}

\maketitle

\begin{abstract}

    A new solution for the Dutch national flag problem is proposed, requiring no three-way
    comparisons, which gives quicksort a proper worst-case runtime of $O(nk)$ for inputs with $k$
    distinct elements. This is used together with other known and novel techniques to construct a
    hybrid sort that is never significantly slower than regular quicksort while speeding up
    drastically for many input distributions.

\end{abstract}

\section{Introduction}

Arguably the most used hybrid sorting algorithm at the time of writing is
introsort\cite{Musser97introspectivesorting}. A combination of insertion sort,
heapsort\cite{williams1964algorithm} and quicksort\cite{hoare1962quicksort}, it is very fast and can
be seen as a truly hybrid algorithm. The algorithm performs introspection and decides when to change
strategy using some very simple heuristics. If the recursion depth becomes too deep, it switches to
heapsort, and if the partition size becomes too small it switches to insertion sort.

The goal of pattern-defeating quicksort (or \textit{pdqsort}) is to improve on introsort's
heuristics to create a hybrid sorting algorithm with several desirable properties. It maintains
quicksort's logarithmic memory usage and fast real-world average case, effectively recognizes and
combats worst case behavior (deterministically), and runs in linear time for a few common patterns.
It also unavoidably inherits in-place quicksort's instability, so pdqsort can not be used in
situations where stability is needed.

In Section~\ref{sec:overview} we will explore a quick overview of pattern-defeating quicksort and related work,
in Section~\ref{sec:flag} we propose our new solution for the Dutch national flag problem and prove its $O(nk)$
worst-case time for inputs with $k$ distinct elements, Section~\ref{sec:novel} describes other novel
techniques used in pdqsort while Section~\ref{sec:known} describes various previously known ones. Our final
section consists of an empirical performance evaluation of pdqsort.

This paper comes with an open source state of the art C++ implementation\cite{pdqsortgithub}. The
implementation is fully compatible with \texttt{std::sort} and is released under a permissive
license. Standard library writers are invited to evaluate and adopt the implementation as their
generic unstable sorting algorithm. At the time of writing the Rust programming language has adopted
pdqsort for \texttt{sort\_unstable} in their standard library thanks to a porting effort by Stjepan
Glavina. The implementation is also available in the C++ \texttt{Boost.Sort} library.

\section{Overview and related work}
\label{sec:overview}

Pattern-defeating quicksort is a hybrid sort consisting of quicksort\cite{hoare1962quicksort},
insertion sort and a fallback sort. In this paper we use heapsort as our fallback sort, but really
any $O(n \log n)$ worst case sort can be used - it's exceptionally rare that the heuristics end up
switching to the fallback. Each recursive call of pdqsort chooses either to fall back, use
insertion sort or partition and recurse.

Insertion sort is used for small recursive calls as despite its $O(n^2)$ worst case it has great
constant factors and outperforms quicksort for small $n$. We later discuss (not novel but
nevertheless important) techniques to properly implement insertion sort for usage in a hybrid
algorithm. We have tried to use small sorting networks akin to Codish's\cite{Codish2017} approach as
an alternative base case but were unable to beat insertion sort. We conjecture that the small code
size of insertion sort has sufficient positive cache effects to offset the slower algorithm when
used in a hybrid sort.

For partitioning we use a novel scheme that indirectly ends up performing tripartite partitioning.
This is used in conjunction with the very important technique from BlockQuicksort\cite{EdelkampW16}
that greatly speeds up partitioning with branchless comparison functions. In a sense our
partitioning scheme is similar to Yaroslavskiy's dual-pivot quicksort\cite{dualpivot} from the
perspective of equal elements. We did consider dual- and multi-pivot\cite{multipivot} variants of
quicksort but chose to stick to traditional partitioning for simplicity, applicability of the
techniques described here, and due to the massive speedup from BlockQuicksort, which does not
trivially extend to multiple pivots (see \textsc{IPS$^4$o}\cite{ips4o} for that).

We use the well-known median-of-3 pivot selection scheme, with John Tukey's
ninther\cite{tukey1978ninther} for sufficiently large inputs, which Kurosawa\cite{Kurosawa16} finds
has a near-identical number of comparisons to selecting the true median, but is significantly
simpler. 

Finally, we beat patterns with two novel additions. We diverge from introsort by no longer simply
using the call depth to switch to a fallback. Instead we define a concept of a bad partition, and
track those instead. This results in a more precise heuristic on bad sorting behavior, and
consequently fallback usage.  Whenever we detect a bad partition we also swap a couple well-selected
elements, which not only breaks up common patterns and introduces 'noise' similar to how a random
quicksort would behave, it introduces new pivot candidates in the selection pool. We also use a
technique (not known to us in previous literature) due to Howard Hinnant to optimistically handle
ascending/descending sequences with very little overhead.

\section{A faster solution for the Dutch national flag problem}
\label{sec:flag}

A naive quicksort implementation might trigger the $\Theta(n^2)$ worst case on the all-equal input
distribution by placing equal comparing elements in the same partition. A smarter implementation
either always or never swaps equal elements, resulting in average case performance as equal elements
will be distributed evenly across the partitions. However, an input with many equal comparing
elements is rather common\footnote{It is a common technique to define a custom comparison function
that only uses a subset of the available data to sort on, e.g. sorting cars by their color. Then you
have many elements that aren't fundamentally equal, but do compare equal in the context of a sorting
operation.}, and we can do better. Handling equal elements efficiently requires tripartite
partitioning, which is equivalent to Dijkstra's Dutch national flag
problem\cite{Dijkstra:1997:DP:550359}.

\begin{wrapfigure}{r}{0.4\textwidth}
    \centering
    \includegraphics[width=0.38\textwidth]{graphics/bentley_invariant.tikz}
    \caption{The invariant used by Bentley-McIlroy. After partitioning the equal elements stored at the
    beginning and at the end are swapped to the middle.}

    \includegraphics[width=0.38\textwidth]{graphics/pdqsort_invariant.tikz}

    \caption{The invariant used by \texttt{partition\_right} of pdqsort, shown at respectively the
    initial, halfway and finished state. When the loop is done the pivot gets swapped into its
    correct position. $p$ is the single pivot element. $r$ is the pointer returned by
    the partition routine indicating the pivot position. The dotted lines indicate how $i$ and $j$
    change as the algorithm progresses. This is a simplified representation, e.g. $i$ is actually off by
    one.}
\end{wrapfigure}

Pattern-defeating quicksort uses the fast 'approaching pointers' method\cite{bentley1993engineering}
for partitioning. Two indices are initialized, $i$ at the start and $j$ at the end of the sequence.
$i$ is incremented and $j$ is decremented while maintaining an invariant, and when both invariants
are invalidated the elements at the pointers are swapped, restoring the invariant.  The algorithm
ends when the pointers cross. Implementers must take great care, as this algorithm is conceptually
simple, but is very easy to get wrong.

Bentley and McIlroy describe an invariant for partitioning that swaps
equal elements to the edges of the partition, and swaps them back into the middle after
partitioning. This is efficient when there are many equal elements, but has a significant drawback.
Every element needs to be explicitly checked for equality to the pivot before swapping, costing
another comparison. This happens regardless of whether there are many equal elements, costing
performance in the average case.

Unlike previous algorithms, pdqsort's partitioning scheme is not self contained. It uses two
separate partition functions, one that groups elements equal to the pivot in the left partition
(\texttt{partition\_left}), and one that groups elements equal to the pivot in the right partition
(\texttt{partition\_right}). Note that both partition functions can always be implemented using a
single comparison per element as $a < b \Leftrightarrow a \ngeq b$ and $a \nless b \Leftrightarrow a
\geq b$.

For brevity we will be using a simplified, incomplete C++ implementation to illustrate pdqsort.  It
only supports \texttt{int} and compares using comparison operators. It is however trivial to extend
this to arbitrary types and custom comparator functions. To pass subsequences\footnote{Without
exception, in this paper subsequences are assumed to be contiguous.} around, the C++ convention is
used of one pointer at the start, and one pointer at one-past-the-end. For the exact details refer
to the full implementation\cite{pdqsortgithub}.

Both partition functions assume the pivot is the first element, and that it has been selected as a
median of at least three elements in the subsequence. This saves a bound check in the first
iteration.

\begin{figure}[t]
\centering
\begin{lstlisting}[language=C++,basicstyle=\ttfamily\footnotesize,columns=flexible]
int* part_left(int* l, int* r) {           int* part_right(int* l, int* r) {  
    int* i = l; int* j = r;                    int* i = l; int* j = r; 
    int p = *l;                                int p = *l;
                                                                                        
    while (*--j > p);                          while (*++i < p);                   
    if (j + 1 == r) {                          if (i - 1 == l) {                   
        while (i < j && *++i <= p);                while (i < j && *--j >= p);
    } else {                                   } else {                            
        while (*++i <= p);                         while (*--j >= p);              
    }                                          }                                   
                                                                                   
                                               // bool no_swaps = (i >= j);        
    while (i < j) {                            while (i < j) {                     
        std::swap(*i, *j);                         std::swap(*i, *j);              
        while (*--j > p);                          while (*++i < p);               
        while (*++i <= p);                         while (*--j >= p);              
    }                                          }                                   
                                                                                   
    std::swap(*l, *j);                         std::swap(*l, *(i - 1));            
    return j;                                  return i - 1;                       
}                                          }                                       
\end{lstlisting}
\caption{An efficient implementation of \texttt{partition\_left} and \texttt{partition\_right}
(named \texttt{part\_left} and \texttt{part\_right} here due to limited page width). Note the
(almost) lack of bound checks, we assume that \texttt{p} was selected as the median of at least
three elements, and in later iterations previous elements are used as sentinels to prevent going out
of bounds. Also note that a pre-partitioned subsequence will perform no swaps and that it is
possible to detect this with a single comparison of pointers, \texttt{no\_swaps}. This is used for a
heuristic later.}
\end{figure}

Given a subsequence $\alpha$ let us partition it using \texttt{partition\_right} using pivot $p$.
We then inspect the right partition, calling its first element $q$ and the remaining subsequence of
elements $\beta$:

\vspace*{.5em}
\noindent \includegraphics[width=\textwidth]{graphics/alt_expl.tikz}

If $p \neq q$ we have $q > p$, and apply \texttt{partition\_right} on $q, \beta$. Rename
$q, \beta$ to be the left partition of this operation (marked as $q', \beta'$ in the diagram to
emphasize renaming). The right partition is marked as '$>$', because in this process we have the
perspective of pivot $p$, but it's definitely possible for elements equal to $q$ to be in the
partition marked '$>$'.

\vspace*{.5em}
\noindent \includegraphics[width=\textwidth]{graphics/alt_expl1.tikz}

We apply the above step recursively as long as $p \neq q$. If at some point $q, \beta$ becomes
empty, we can conclude there were no elements equal to $p$ and the tripartite partitioning was done
when we initially partitioned $\alpha$. Otherwise, consider $p = q$. We know that $\forall x \in
\beta\colon x \geq p$, thus $\nexists x \in \beta\colon x < q$. If we were to partition $q,\beta$
using \texttt{partition\_left}, any element smaller than or equal to $q$ would be partitioned left.
However, we just concluded that $\beta$ can not contain elements smaller than $q$.  Thus, $\beta$'s
left partition only contains elements equal to $q$ (and thus equal to $p$), and its right partition
only contains elements bigger than $q$:

\vspace*{.5em}
\noindent \includegraphics[width=\textwidth]{graphics/alt_expl2.tikz}

\qed

This leads to the partitioning algorithm used by pdqsort. The \textit{predecessor} of a
subsequence is the element directly preceding it in the original sequence. A subsequence that is
\textit{leftmost} has no predecessor. If a subsequence has a predecessor $p$ that compares equal to
the chosen pivot $q$, apply \texttt{partition\_left}, otherwise apply \texttt{partition\_right}. No
recursion on the left partition of \texttt{partition\_left} is needed, as it contains only
equivalent elements.

\subsection{An $O(nk)$ worst case of pdqsort with $k$ distinct elements}

\begin{lemma}
Any predecessor of a subsequence\footnote{Assuming only those subsequences that are passed into
recursive calls of pdqsort.} was the pivot of an ancestor.
\end{lemma}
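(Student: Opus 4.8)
The plan is to prove this by structural induction on the recursion tree of pdqsort, where each node is a subsequence passed to a recursive call and its children are the (recursed-upon) left and right partitions it produces. First I would extract two structural facts from the partition routines. The first is that after either \texttt{partition\_left} or \texttt{partition\_right} the pivot is swapped into its final position $r$ (the returned index), so that the right partition begins at index $r+1$ and its predecessor is exactly the pivot element sitting at $r$. The second is that the left partition begins at the same index as the subsequence it came from, so it inherits that subsequence's predecessor. I would also record that once a pivot is deposited at its final position it is never touched again, since all subsequent work happens strictly inside the left or right partitions, which lie entirely to the left and right of $r$; hence the element serving as a predecessor genuinely remains the pivot that was placed there.

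For the base case I would observe that the root call operates on the entire sequence, which is leftmost and therefore has no predecessor, so the statement holds vacuously.

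For the inductive step I would take a subsequence $S$ for which the claim holds and examine each child on which pdqsort recurses. If the child is the right partition $R$, then by the first structural fact its predecessor is the pivot of $S$, and $S$ is an ancestor of $R$, so the claim holds immediately. If the child is the left partition $L$ (which arises only from \texttt{partition\_right}, since \texttt{partition\_left} never recurses on its all-equal left partition), then by the second structural fact $L$ and $S$ share the same predecessor. If $S$ is leftmost then so is $L$ and there is nothing to prove; otherwise the induction hypothesis supplies an ancestor $A$ of $S$ whose pivot is that common predecessor, and since $A$ is also an ancestor of $L$, the claim follows.

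The main thing to get right — and the only place where the argument is more than bookkeeping — is the first structural fact, i.e. reading off the post-condition of the partition code carefully enough to be certain that the element immediately preceding a right partition is invariably the pivot of the call that produced it, and that this element is never subsequently displaced. Everything else is a routine propagation of the predecessor down left-partition edges until it reaches the call that first introduced it as a pivot.
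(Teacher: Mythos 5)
Your proof is correct and follows essentially the same route as the paper's: the two cases (right child inherits the parent's pivot as predecessor, left child inherits the parent's predecessor) are exactly the paper's case analysis, with your explicit induction merely formalizing the paper's terse ``walk up until the subsequence is a right child of some ancestor'' step. The additional code-level verification that the pivot is deposited at its final position and never displaced is a welcome bit of extra rigor, but not a different argument.
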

\begin{proof}
If the subsequence is a direct right child of its parent partition, its predecessor is the pivot of
the parent. However, if the subsequence is the left child of its parent partition, its predecessor
is the predecessor of its parent. Since our lemma states that our subsequence has a predecessor, it
is not leftmost and there must exist some ancestor of which the subsequence is a right child.
\end{proof}

\begin{lemma}
The first time a distinct value $v$ is selected as a pivot, it can't be equal to its predecessor.
\end{lemma}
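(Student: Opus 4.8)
The plan is to argue by contradiction and lean entirely on the preceding lemma, which guarantees that any predecessor of a subsequence was the pivot of some ancestor. First I would fix the moment in the recursion at which a distinct value $v$ is selected as a pivot for the very first time, say for a subsequence $S$, and suppose toward a contradiction that $S$ has a predecessor $p$ whose value equals $v$.

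The first case to dispose of is when $S$ is leftmost: then it has no predecessor and the statement holds vacuously, so I may assume $S$ is not leftmost and that $p$ genuinely exists. Next I would apply the preceding lemma to $S$, obtaining a proper ancestor $A$ of $S$ whose pivot is exactly $p$. The crucial observation is temporal: since pdqsort partitions a subsequence \emph{before} it recurses into either child, the pivot of any ancestor is committed strictly earlier in the execution than the pivot of $S$, irrespective of the order in which siblings happen to be processed. Hence $p$, which by assumption compares equal to $v$, was already used as a pivot before the current selection of $v$.

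This contradicts the choice of $S$ as the first occasion on which the value $v$ is used as a pivot. Therefore no such $p$ can exist; that is, the predecessor of $S$ cannot compare equal to $v$, which is precisely the claim.

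The main subtlety I expect is keeping the bookkeeping between \emph{value} and \emph{element identity} honest: the phrase ``first time'' is quantified over the distinct value $v$, not over a particular array slot, so the contradiction must produce an \emph{earlier} pivot that merely \emph{compares equal} to $v$ rather than being the same physical element. The temporal-ordering step — that an ancestor's pivot is fixed before a descendant's — is intuitively obvious from the top-down structure of the recursion, but it is the one place I would spell out explicitly, since the entire argument hinges on ``ancestor'' implying ``processed earlier.''
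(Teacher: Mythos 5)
Your proposal is correct and follows exactly the paper's own argument: assume the predecessor compares equal to $v$, invoke the preceding lemma to conclude it was the pivot of an ancestor, and derive a contradiction with the ``first time'' hypothesis. The paper's version is just terser --- it leaves implicit the temporal point you spell out (that an ancestor's pivot is committed before a descendant's) and the value-versus-element distinction, both of which you handle correctly.
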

\begin{proof}
Assume $v$ is equal to its predecessor. By Lemma 1 this predecessor was the pivot of an ancestor
partition. This is a contradiction, thus $v$ is not equal to its predecessor.
\end{proof}
\begin{corollary}
The first time $v$ is selected as a pivot, it is always used to partition with
{\normalfont\texttt{partition\_right}}, and all elements $x$ such that $x = v$ end up in the right partition.
\end{corollary}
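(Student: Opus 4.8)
The plan is to prove the corollary by combining Lemma 2 with the decision rule that pdqsort uses to choose between the two partition functions. Recall from the paragraph immediately preceding the subsection that the algorithm inspects the predecessor of the subsequence being partitioned: if the predecessor compares equal to the chosen pivot, it applies \texttt{partition\_left}, and otherwise it applies \texttt{partition\_right}. So the corollary really has two parts, and I would address them in turn.

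First I would establish that \texttt{partition\_right} is used. Consider the first time a distinct value $v$ is selected as a pivot for some subsequence. There are two cases. If the subsequence is leftmost, it has no predecessor, so by the decision rule \texttt{partition\_right} is used by default. Otherwise the subsequence has a predecessor, and by Lemma 2 that predecessor cannot be equal to $v$. Since the decision rule only selects \texttt{partition\_left} when the predecessor equals the pivot, in this case too the algorithm must use \texttt{partition\_right}. Either way the first partition on $v$ is a \texttt{partition\_right}.

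Second I would argue about where equal elements land. By definition \texttt{partition\_right} groups all elements equal to the pivot into the right partition; this is exactly the specification given when the two partition functions were introduced (\texttt{partition\_right} places elements equal to the pivot in the left partition is the mirror statement for \texttt{partition\_left}, so \texttt{partition\_right} keeps $x = v$ on the right). Hence every element $x$ with $x = v$ ends up in the right partition, which is the remaining claim.

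I do not expect a substantial obstacle here, since the corollary is essentially a direct instantiation of Lemma 2 through the partition-selection rule; the only point requiring care is the leftmost case, where there is no predecessor at all and so Lemma 2 does not literally apply — I would handle that separately by appealing directly to the default behavior of the decision rule. The rest follows immediately from the definitions of the two partition routines.
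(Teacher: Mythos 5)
Your proof is correct and follows exactly the route the paper intends: the corollary is stated without proof precisely because it is the direct combination of Lemma~2 with the partition-selection rule (predecessor equal to pivot $\Rightarrow$ \texttt{partition\_left}, otherwise \texttt{partition\_right}) and the stated specification of \texttt{partition\_right}. Your explicit handling of the leftmost case (no predecessor, hence \texttt{partition\_right} by default) is a welcome bit of care that the paper leaves implicit; the only blemish is a garbled parenthetical where you swapped the descriptions of the two routines, though your conclusion there is still the right one.
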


\begin{lemma}
Until an element equal to $v$ is selected as a pivot again, for all $x$ comparing equal to $v$, $x$
must be in the partition directly to the right of $v$.
\end{lemma}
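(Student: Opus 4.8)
The plan is to argue by induction on the recursive partitioning steps that take place after $v$ is first selected as a pivot, maintaining throughout a single invariant: at every step there is one subsequence $S$ such that (i) $S$ sits directly to the right of $v$, (ii) every element comparing equal to $v$ lies in $S$, (iii) every $x \in S$ satisfies $x \geq v$, and (iv) the predecessor of $S$ is $v$. The base case is supplied directly by the Corollary: the first time $v$ is chosen as a pivot it partitions with \texttt{partition\_right}, placing $v$ in its final position and collecting every $x = v$ into the right partition $S$, which is by construction flush against $v$, contains only elements $\geq v$, and has predecessor $v$. So all four parts hold initially.

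For the inductive step I would examine the recursive call on $S$ and the pivot $q$ it selects. Since $q \in S$, part (iii) gives $q \geq v$. If $q = v$, then an element equal to $v$ has been selected as a pivot again, which is exactly the stopping condition in the statement, so nothing more need be shown. Otherwise $q > v$, and because the predecessor of $S$ is $v \neq q$ by (iv), the partitioning rule prescribes \texttt{partition\_right}. That routine sends precisely the elements $< q$ into its left partition $S'$, and every $x = v$ satisfies $x = v < q$, so all elements equal to $v$ land in $S'$. It then remains to check that $S'$ re-establishes the invariant: $S'$ occupies the leftmost portion of $S$ and so is still flush against $v$; $S' \subseteq S$ immediately yields (iii); and by the argument in the proof of Lemma 1, a left child keeps its parent's predecessor, so $S'$ again has predecessor $v$. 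This closes the induction, which terminates exactly when a pivot equal to $v$ is drawn.

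The hard part, I expect, will be the positional bookkeeping rather than the comparison logic. I must argue carefully that ``directly to the right of $v$'' is genuinely preserved: once $v$ is swapped into its final slot it never moves, everything placed to its right stays to its right, and because $S'$ is always taken as a \emph{left} child it keeps sharing its left boundary with $v$. I also need to confirm that recursion on sibling (right) partitions cannot disturb the block of equal elements; this holds because those siblings occupy strictly higher positions and quicksort rearranges each subsequence in place and independently. Once these positional facts are pinned down, parts (ii)--(iv) follow mechanically from the partition rule and from Lemma 1, and the claim is exactly the invariant read off at an arbitrary step before a pivot equal to $v$ is selected.
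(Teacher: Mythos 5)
Your proof is correct, and its logical core is the same one the paper uses: an element $x = v$ can never end up on the right side of a pivot $q > v$ (because that would force $x \geq q > v$), and by Corollary~1 it can never lie to the left of $v$. The difference is one of packaging. The paper disposes of the lemma in three sentences as a pure exclusion argument: pivots $w < v$ act entirely to the left of $v$ and are irrelevant, no pivot $w > v$ can absorb $x = v$ into its right partition, hence the equal elements stay put. Notably, that argument never needs to know \emph{which} partition routine is invoked, since both \texttt{partition\_left} and \texttt{partition\_right} send elements strictly less than the pivot to the left. You instead run an explicit induction down the chain of left children, maintaining a four-part invariant, and you route through the predecessor rule (predecessor $v \neq q$ forces \texttt{partition\_right}) to determine the routine used --- a dependency you could actually drop, making your inductive step slightly more robust. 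What your extra machinery buys is twofold: it makes rigorous the positional claim ``directly to the right of $v$'' (that the left child shares its left boundary with $v$, that $v$ never moves, that sibling recursions cannot disturb the block), which the paper's proof treats implicitly; and your invariant (iv), that the predecessor of $S$ is always $v$, is precisely the fact Lemma~4 needs next, so your formulation pre-packages that step. The paper's version is shorter; yours is more self-contained and makes the bookkeeping auditable.
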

\begin{proof}
By Corollary 1, $x$ can not be in a partition to the left of $v$ and thus any other partitions with
pivot $w < v$ are irrelevant. There also can't be a partition with pivot $w > v$ that puts any $x$
equal to $v$ in its right partition, as this would imply $x \geq w > v$, a contradiction. So any
elements $x = v$ stay in a partition directly to the right of $v$.
\end{proof}

\begin{lemma}
The second time a value equal to $v$ is selected as a pivot, all $x = v$ are in the correct position
and are not recursed upon any further.
\end{lemma}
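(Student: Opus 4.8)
The plan is to reduce the claim to the tripartite-partitioning argument established earlier in this section: I want to show that the second $v$-pivot is necessarily drawn from a subsequence whose predecessor equals $v$, which forces \texttt{partition\_left} and settles all elements equal to $v$ in a single step. To begin I would locate the subsequence $S$ supplying this second pivot. By Corollary 1 the first $v$-pivot is placed in its correct position by \texttt{partition\_right}, with all $x=v$ landing in its right partition $R$; and by Lemma 3 every $x=v$ stays in the block directly to the right of $v$ until an equal element is chosen as a pivot again. Hence the second pivot equal to $v$ is selected from some subsequence $S$ lying inside that block.

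The crucial step is to show that the predecessor of $S$ is $v$. Since the elements equal to $v$ are the smallest elements of $R$ (everything in $R$ is $\geq v$), whenever $R$ or one of its descendants is partitioned with a pivot $w>v$ the equal elements fall into the left child; so they descend the leftmost spine of $R$'s recursion subtree, always reached through left children. By Lemma 1 the predecessor is inherited unchanged along any chain of left children, and since $R$ is itself a right child of the first $v$-partition its predecessor is $v$; therefore $S$ has predecessor $p=v$ as well. As its chosen pivot is $q=v$, we get $p=q$, and the scheme applies \texttt{partition\_left} to $S$.

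Finally I would invoke the tripartite argument directly. Writing $S=q,\beta$, every element of $\beta$ satisfies $x\geq v=q$, so $\beta$ contains nothing smaller than $q$; thus \texttt{partition\_left} collects exactly the elements equal to $q$ (equivalently, equal to $v$) in its left partition, and by Lemma 3 these are \emph{all} of the $x=v$. This block is flanked by elements $<v$ on the left (the first $v$-pivot and everything before it) and by elements $>v$ on the right, so every $x=v$ now occupies a correct final position; and because the scheme performs no recursion on the left partition of \texttt{partition\_left}, they are never recursed upon again.

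The step I expect to be the main obstacle is the second paragraph: rigorously combining Lemma 3 and Lemma 1 to conclude that $S$ is reached purely through left children and therefore inherits the predecessor $v$. Lemma 3 pins the equal elements directly to the right of $v$, whereas Lemma 1 governs how a predecessor propagates, and the argument hinges on matching these two facts so that ``the block directly to the right of $v$'' is exactly the leftmost spine whose predecessor is $v$.
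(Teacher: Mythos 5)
Your proof is correct and follows essentially the same route as the paper: Lemma 3 forces the predecessor of the subsequence supplying the second $v$-pivot to be $v$ itself, so \texttt{partition\_left} is applied and all elements equal to $v$ land in its non-recursed left partition. Your second paragraph merely makes explicit (via the left-spine/predecessor-inheritance argument from the proof of Lemma 1) what the paper reads off directly from Lemma 3's phrase ``directly to the right of $v$,'' so the perceived obstacle is not a real gap.
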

\begin{proof}
The second time another element $x = v$ is selected as a pivot, Lemma 3 shows that $v$ must be
its predecessor, and thus it is used to partition with \texttt{partition\_left}. In this
partitioning step all elements equal to $x$ (and thus equal to $v$) end up in the left partition,
and are not further recursed upon. Lemma 3 also provides the conclusion we have just finished
processing \textbf{all} elements equal to $v$ passed to pdqsort.
\end{proof}

\begin{theorem}
pdqsort has complexity $O(nk)$ when the input distribution has $k$ distinct values.\footnote{Note
that this is an upper bound. When $k$ is big $O(n \log n)$ still applies.}
\end{theorem}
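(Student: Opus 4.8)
The plan is to turn the structural facts already established in Lemmas 1--4 into a routine accounting argument, since the conceptual heart of the matter is done. First I would reframe the total running time as a sum over the partitioning operations: a recursive call that partitions a subsequence of length $m$ performs $\Theta(m)$ work, because the approaching-pointers scan of \texttt{partition\_left} / \texttt{partition\_right} and its swaps are linear in $m$. As every such subsequence satisfies $m \le n$, each partitioning call costs $O(n)$, so it suffices to bound the number of partitioning calls by $O(k)$.

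Second, I would charge each partitioning call to the value of its pivot. By Corollary 1 the first selection of a distinct value $v$ as a pivot routes every element equal to $v$ into a single contiguous block directly to its right; by Lemma 3 that block persists until an element equal to $v$ is chosen as a pivot again; and by Lemma 4 this second selection resolves all copies of $v$ via \texttt{partition\_left} and removes them from the recursion entirely. Hence no value is ever selected as a pivot a third time, so each of the $k$ distinct values absorbs at most two charges and the total number of partitioning calls is at most $2k$. Multiplying by the $O(n)$ per-call cost yields $O(nk)$ for all partitioning work.

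Third, I would account for the non-partitioning work at the leaves of the recursion. The recursion tree is binary with at most $2k$ internal nodes, hence $O(k)$ leaves; each insertion-sort base case runs on a subsequence bounded by the constant threshold and so costs $O(1)$, contributing $O(k) = O(nk)$ in total. This closes the bound as long as the only leaves are empty calls and insertion-sort calls.

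The main obstacle I expect is the fallback (heapsort) branch: a single fallback on a large subsequence costs $O(m \log m)$, which could exceed $O(nk)$ when $k$ is small. I would handle this by observing that the leaf subsequences are disjoint and their sizes sum to at most $n$, so the combined base-case-and-fallback work is at most $O(n \log n)$; taken together with the always-valid $O(n\log n)$ guarantee inherited from the introsort-style fallback, this gives the stated upper bound and matches the footnote's remark that $O(n\log n)$ still applies when $k$ is large (so the true bound is $O(n\min(k,\log n))$). The delicate point to state carefully is therefore not the partition count --- which follows almost immediately from Corollary 1 and Lemma 4 --- but the interaction of the bad-partition fallback heuristic with few-distinct-value inputs.
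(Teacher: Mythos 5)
Your core argument --- each of the $k$ distinct values can be selected as a pivot at most twice by Lemma 4, hence there are at most $2k$ partitioning calls, each costing $O(n)$ --- is precisely the paper's own proof, which stops there. Your additional accounting for the insertion-sort leaves is also fine, and you deserve credit for noticing that the paper's proof silently ignores the heapsort fallback, which is a genuine subtlety.

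The problem is that your resolution of that subtlety does not work. You bound the fallback-plus-leaf work by $O(n \log n)$ (disjoint subsequences summing to at most $n$) and then assert that combining this with the $O(nk)$ partitioning bound ``gives the stated upper bound.'' It does not: $O(nk) + O(n \log n) = O(n \max(k, \log n))$, which for $k \in o(\log n)$ is strictly weaker than the claimed $O(nk)$ --- in that regime you have only proven $O(n \log n)$, and your closing parenthetical ``$O(n \min(k, \log n))$'' has swapped $\min$ for $\max$. The missing idea is that the fallback cannot fire at all when $k$ is small: the counter starts at $\log n$ and is decremented at most once per recursive level, and since every root-to-leaf path in the recursion tree consists of partitioning calls whose pivot values are among the $k$ distinct values, with no value occurring more than twice (Lemma 4), every path has length at most $2k$. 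Hence heapsort can only be invoked when $2k \geq \log n$, in which case your $O(n \log n)$ bound on the fallback work is itself $O(nk)$; when $2k < \log n$ there is no fallback work whatsoever. With that case distinction your accounting closes and yields $O(nk)$ unconditionally; the sharper $O(n \min(k, \log n))$ then follows by combining this theorem, now proven separately, with the paper's Theorem 2.
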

\begin{proof}
Lemma 4 proves that every distinct value can be selected as a pivot at most twice, after which every
element equal to that value is sorted correctly. Each partition operation has complexity $O(n)$. The
worst-case runtime is thus $O(nk)$.
\end{proof}
\qed

\section{Other novel techniques}
\label{sec:novel}

\subsection{Preventing quicksort's $O(n^2)$ worst case}

Pattern-defeating quicksort calls any partition operation which is more unbalanced than $p$ (where
$p$ is the percentile of the pivot, e.g. $\frac{1}{2}$ for a perfect partition) a \textit{bad
partition}.  Initially, it sets a counter to $\log n$. Every time it encounters a bad partition, it
decrements the counter before recursing\footnote{This counter is maintained separately in every
subtree of the call graph - it is not a global to the sort process. Thus, if after the first
partition the left partition degenerates in the worst case it does not imply the right partition
also does.}. If at the start of a recursive call the counter is 0 it uses heapsort to sort this
subsequence, rather than quicksort.

\begin{lemma}
At most $O(n \log n)$ time is spent in pdqsort on bad partitions.
\end{lemma}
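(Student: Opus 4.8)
The plan is to bound the total size of all subsequences on which a bad partition is performed, since each partition of a subsequence of size $m$ costs $O(m)$ time and the lemma is precisely a statement about the summed cost of the bad ones. The key structural fact I would extract from the counter mechanism is that along any root-to-leaf path of pdqsort's recursion tree, at most $\log n$ bad partitions can occur: the counter starts at $\log n$, strictly decreases with every bad partition encountered along that path, and once it reaches $0$ the corresponding call is handed to heapsort rather than partitioned further. Because the counter is (by assumption) maintained independently in each subtree and never reset upward, this bound of $\log n$ applies to each path separately.

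First I would assign to every node $v$ at which a bad partition occurs its \emph{bad-depth}: the number of strict ancestors of $v$ that were themselves bad partitions. By the fact above, the bad-depth of any bad partition is at most $\log n - 1$, so the admissible bad-depths are $0, 1, \ldots, \log n - 1$.

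Next I would show that, for a fixed bad-depth $d$, the subsequences of the bad partitions at depth $d$ are pairwise disjoint. If two such nodes $u,v$ were in an ancestor-descendant relationship, say $u$ an ancestor of $v$, then every bad ancestor of $u$ is also a bad ancestor of $v$, and $u$ itself is an additional bad ancestor of $v$; hence $v$ would have strictly larger bad-depth than $u$, contradicting that both equal $d$. Nodes not in an ancestor-descendant relationship correspond to disjoint subsequences, since every partition splits its elements among its children. Therefore the total size of all bad-partition subsequences at a fixed bad-depth is at most $n$.

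Finally, summing over the at most $\log n$ distinct bad-depths gives total bad-partition subsequence size at most $n \log n$, and since each bad partition costs time linear in its size, the time spent on bad partitions is $O(n \log n)$. The step I expect to be the crux is the disjointness claim — equivalently, its dual formulation that each individual element lies in the subsequence of at most $\log n$ bad partitions, namely exactly the bad partitions on its own root-to-leaf path — because this is the point at which the per-subtree independence of the counter must be invoked to turn a per-path budget into a global $O(n\log n)$ bound.
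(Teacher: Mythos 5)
Your proof is correct and takes essentially the same approach as the paper: the counter caps the number of bad partitions along any root-to-leaf path at $\log n$, and bad partitions at the same ``level'' account for at most $O(n)$ work, giving $O(n \log n)$ total. Your bad-depth stratification and the accompanying disjointness argument simply make rigorous what the paper's terse proof (``after $\log n$ levels that contain a bad partition the call tree terminates in heapsort; at each level we may do at most $O(n)$ work'') leaves implicit.
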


\begin{proof}
Due to the counter ticking down, after $\log n$ levels that contain a bad partition
the call tree terminates in heapsort. At each level we may do at most $O(n)$ work, giving a runtime
of $O(n \log n)$.
\end{proof}

\begin{lemma}
At most $O(n \log n)$ time is spent in pdqsort on good partitions.
\end{lemma}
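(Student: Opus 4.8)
The plan is to mirror the structure of Lemma~5 but replace the ``levels that contain a bad partition'' bound by a size-contraction argument special to good partitions. First I would fix the constant $p$ from the definition of a bad partition: a good partition has its pivot land no closer than percentile $p$ to either end (with $0 < p \le \tfrac{1}{2}$), so both of its child subsequences have size at most a $(1-p)$ fraction of the parent's size. Bad partitions, insertion sort and heapsort never enlarge a subsequence, so along any path from the root of the recursion tree the size is multiplied by at most $(1-p)$ at each good partition and is never increased otherwise. Hence after $g$ good partitions the subsequence has size at most $(1-p)^g n$, and once this drops below the fixed insertion-sort threshold the recursion along that path stops. This shows that every root-to-leaf path contains at most $\log_{1/(1-p)} n = O(\log n)$ good partitions.

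Next I would account for the work by binning the good-partition calls according to their \emph{good-depth}: the number of good-partition calls that are proper ancestors of a given call. The crucial claim is that any two distinct good-partition calls sharing the same good-depth operate on disjoint subsequences. To see this, recall that in the recursion tree two calls are either nested or disjoint; if two good-partition calls were nested, the ancestor is itself a good partition, so it contributes $+1$ to the good-depth of the descendant, forcing the descendant's good-depth to strictly exceed the ancestor's, contradicting equality. Disjointness then guarantees that the subsequences at a fixed good-depth are pairwise disjoint subsets of the original $n$ elements, so their sizes sum to at most $n$; since each partition operation costs $O(m)$ on a subsequence of size $m$, the total work at each good-depth is $O(n)$.

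Finally I would combine the two bounds: by the first step the good-depth ranges over only $O(\log n)$ values, and by the second step each value contributes $O(n)$ work, for a total of $O(n \log n)$.

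I expect the main obstacle to be the disjointness claim and, relatedly, pinning down the exact meaning of good-depth so that bad partitions interleaved on a path do not break the argument. The size-contraction step is routine once $p$ is fixed, but the bookkeeping that lets me charge $O(n)$ per good-depth, rather than per tree level, where the presence of bad partitions could make the number of levels as large as $O(n)$, is where the care is needed.
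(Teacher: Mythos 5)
Your proof is correct, but it takes a genuinely different route from the paper. The paper argues via a recurrence: it considers the scenario in which every partition is the worst possible good one (splitting $pn$ against $(1-p)n$), writes $T(n,p) = n + T(pn,p) + T((1-p)n,p)$, and invokes the Akra--Bazzi theorem to conclude $\Theta(n \log n)$. You instead give a direct charging argument: good partitions contract subsequence sizes geometrically, so any root-to-leaf path contains $O(\log n)$ of them; binning good-partition calls by good-depth and using the fact that equal good-depth forces disjointness yields $O(n)$ work per bin. Your approach is more elementary (no black-box theorem) and, notably, it explicitly handles the issue the paper's proof glosses over -- that in a real pdqsort run good and bad partitions interleave arbitrarily in the recursion tree, so the lemma is not literally about a uniform recurrence; your good-depth bookkeeping makes rigorous why bad partitions cannot inflate the good-partition work, which the paper only addresses implicitly by calling its recurrence the ``worst possible'' scenario. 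What the paper's approach buys in exchange is quantitative precision: the same recurrence is later solved exactly (the $1/H(p)$ entropy factor), which drives the concrete choice $p = 0.125$; your argument, while airtight for the asymptotic claim, yields no such constant.
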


\begin{proof}
Consider a scenario where quicksort's partition operation always puts $pn$ elements in the left
partition, and $(1-p)n$ in the right. This consistently forms the worst possible good partition.
Its runtime can be described with the following recurrence relation:
$$T(n, p) = n + T(pn, p) + T((1-p)n, p)$$
For any $p \in (0, 1)$ the Akra-Bazzi\cite{Akra1998} theorem shows
$\Theta(T(n, p)) = \Theta(n \log n)$.
\end{proof}

\begin{theorem}
Pattern-defeating quicksort has complexity $O(n \log n)$.
\end{theorem}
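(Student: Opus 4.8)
The plan is to observe that the two preceding lemmas have already split the total cost of pdqsort into two buckets, each bounded by $O(n \log n)$, so that the theorem is essentially their sum once the work performed at the leaves of the recursion is accounted for. First I would note that every partition carried out during a run of pdqsort is, by the definition of a bad partition given above, classified as exactly one of \emph{good} or \emph{bad}, with no overlap and no omission. Hence the total time spent inside the partition routines equals the time on good partitions plus the time on bad partitions, which by the two lemmas is $O(n \log n) + O(n \log n) = O(n \log n)$.

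Next I would account for the non-partitioning work, namely the insertion-sort base cases and the heapsort fallback calls that sit at the leaves of the recursion tree and are therefore not counted by either lemma. The key structural observation is that these leaf subsequences are pairwise disjoint and their sizes sum to at most $n$, since each partition step divides a subsequence into disjoint contiguous pieces. Each insertion-sort leaf operates on a subsequence whose length is bounded by a fixed threshold constant and thus costs $O(1)$; as there are at most $O(n)$ leaves, their combined cost is $O(n)$. Each heapsort leaf on a subsequence of size $m$ costs $O(m \log m) \le O(m \log n)$, and summing over all such leaves gives $O(\log n) \cdot \sum_i m_i \le O(n \log n)$.

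Adding the partition work to the leaf work then yields $O(n \log n)$ overall, which is the claim. I expect the heapsort accounting to be the only genuinely delicate step: a single heapsort call can cost as much as $O(n \log n)$ on its own, so the bound cannot come from controlling any individual call. Instead it rests on the disjointness of the leaf subsequences together with the fact, used already in the bad-partition lemma, that fallback to heapsort is precisely what terminates each chain of bad partitions after $\log n$ levels. Provided that observation is stated cleanly, the remaining arithmetic is routine and the theorem follows.
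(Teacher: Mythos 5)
Your proposal is correct and takes essentially the same route as the paper: decompose every recursive call into good partitions, bad partitions, and the degenerate (fallback/base-case) work, then invoke the two preceding lemmas and sum. Your treatment is in fact more explicit than the paper's one-line proof --- in particular the summation over disjoint heapsort leaves of sizes $m_i$ giving $\sum_i O(m_i \log n) = O(n \log n)$, and the separate $O(n)$ accounting of insertion-sort base cases, are details the paper glosses over by simply citing ``heapsort also being $O(n \log n)$.''
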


\begin{proof}
Pattern-defeating quicksort spends $O(n \log n)$ time on good partitions, bad partitions, and
degenerate cases (due to heapsort also being $O(n \log n)$). These three cases exhaustively
enumerate any recursive call to pdqsort, thus pattern-defeating quicksort has complexity
$O(n \log n)$.
\end{proof}
\qed

We have proven that for any choice of $p \in (0, 1)$ the complexity of pattern-defeating quicksort
is $O(n \log n)$. However, this does not tell use what a good choice for $p$ is.

Yuval Filmus\cite{yuval} solves above recurrence, allowing us to study the slowdown of quicksort
compared to the optimal case of $p = \frac{1}{2}$. He finds that the solution is
$$\lim_{n \to \infty} \frac{T(n, p)}{T(n, \frac{1}{2})} = \frac{1}{H(p)}$$
where $H$ is Shannon's binary entropy function: $$H(p) = -p \log_2(p) - (1-p)\log_2(1-p)$$

Plotting this function gives us a look at quicksort's fundamental performance characteristics:

\begin{figure}[H]
    \centering
    \begin{tikzpicture}
      \begin{axis}[ 
        domain=0:0.5,
        width=0.9\textwidth,
        height=6cm,
        restrict y to domain=1:5,
        samples=300,
        xlabel=$p$,
        ylabel=factor,
        x tick label style={/pgf/number format/fixed},
        minor xtick={0, 0.05, ..., 1},
        ytick={1, 1.5, ..., 5},
        smooth, grid style={dashed}, grid
      ] 
      \addplot [mark=none] {1/(-x*log2(x) - (1 - x)*log2(1-x))}; 
      \end{axis}
    \end{tikzpicture}

    \caption{Slowdown of $T(n, p)$ compared to $T(n, \frac{1}{2})$. This beautifully shows why
    quicksort is generally so fast. Even if every partition is split 80/20, we're still running only
    40\% slower than the ideal case.}
    \label{fig:slowdown}
\end{figure}
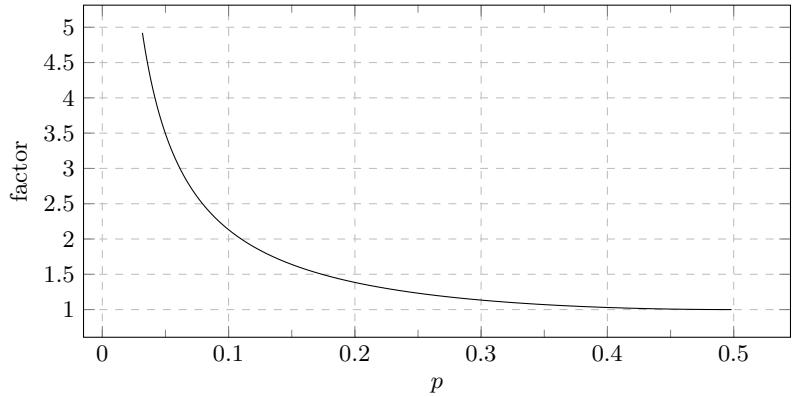

From benchmarks we've found that heapsort is roughly about twice as slow as quicksort for sorting
randomly shuffled data. If we then choose $p$ such that $H(p)^{-1} = 2$ a bad
partition becomes roughly synonymous with `worse than heapsort'.

The advantage of this scheme is that $p$ can be tweaked if the architecture changes, or you have
a different worst-case sorting algorithm instead of heapsort.

We have chosen $p = 0.125$ as the cutoff value for bad partitions for two reasons: it's
reasonably close to being twice as slow as the average sorting operation and it can be computed
using a simple bitshift on any platform.

Using this scheme as opposed to introsort's static logarithmic recursive call limit for preventing
the worst case is more precise. While testing we noticed that introsort (and to a lesser extent
pdqsort) often have a rough start while sorting an input with a bad pattern, but after some
partitions the pattern is broken up. Our scheme then procedes to use the now fast quicksort for the
rest of the sorting whereas introsort too heavily weighs the bad start and degenerates to heapsort.

\subsection{Introducing fresh pivot candidates on bad partitions}

Some input patterns form some self-similar structure after partitioning. This can cause a similar
pivot to be repeatedly chosen. We want to eliminate this.  The reason for this can be found in
Figure~\ref{fig:slowdown} as well. The difference between a good and mediocre pivot is small, so
repeatedly choosing a good pivot has a relatively small payoff. The difference between a mediocre
and bad pivot is massive. An extreme example is the traditional $O(n^2)$ worst case: repeatedly
partitioning without real progress.

The classical way to deal with this is by randomizing pivot selection (also known as randomized
quicksort). However, this has multiple disadvantages. Sorting is not deterministic, the access
patterns are unpredictable and extra runtime is required to generate random numbers. We also destroy
beneficial patterns, e.g. the technique in section 5.2 would no longer work for descending patterns
and performance on 'mostly sorted' input patterns would also degrade.

Pattern-defeating quicksort takes a different approach. After partitioning we check if the partition
was \textit{bad}. If it was, we swap our pivot candidates for others. In our implementation pdqsort
chooses the median of the first, middle and last element in a subsequence as the pivot, and swaps the
first and last candidate for ones found at the 25\% and 75\% percentile after encountering a bad
partition. When our partition is big enough that we would be using Tukey's ninther for pivot
selection we also swap the ninther candidates for ones at roughly the 25\% and 75\% percentile of
the partition.

With this scheme pattern-defeating quicksort is still fully deterministic, and with minimal overhead
breaks up many of the patterns that regular quicksort struggles with. If the downsides of
non-determinism do not scare you and you like the guarantees that randomized quicksort provides
(e.g. protection against DoS attacks) you can also swap out the pivot candidates with random
candidates. It's still a good idea to \textit{only} do this after a bad partition to prevent
breaking up beneficial patterns.

\section{Previously known techniques}
\label{sec:known}

\subsection{Insertion sort}

Essentially all optimized quicksort implementations switch to a different algorithm when recursing
on a small ($\leq 16$--$32$ elements)  subsequence, most often insertion sort. But even the
simple insertion sort is subject to optimization, as a significant amount of time is spent in this
phase.

It's important to use a series of moves instead of swaps, as this eliminates a lot of unnecessary
shuffling. Another big\footnote{5-15\% from our benchmarks, for sorting small integers.} improvement
can be made by eliminating bounds checking in the inner insertion sort loop. This can be done for
any subsequence that is not leftmost, as there must exist some element before the subsequence you're
sorting that acts as a sentinel breaking the loop. Although this change is tiny, for optimal
performance this requires an entirely new function as switching behavior based on a condition
defeats the purpose of this micro-optimization. This variant is
traditionally\cite{musser1994algorithm} called \texttt{unguarded\_insertion\_sort}. This is a prime
example where introducing more element comparisons can still result in faster code.

\subsection{Optimistic insertion sorting on a swapless partition}

This technique is not novel, and is due to Howard Hinnants\cite{libcxx} \texttt{std::sort}
implementation, but has to our knowledge not been described in literature before. After partitioning
we can check whether the partition was \textit{swapless}. This means we didn't have to swap any
element (other than putting the pivot in its proper place). Checking for this condition can be done
with a single comparison as shown in Figure 3 with the variable \texttt{no\_swaps}.

If this is the case and the partition wasn't \textit{bad} we do a partial insertion sort over both
partitions that aborts if it has to do more than a tiny number of corrections (to minimize overhead
in the case of a misdiagnosed best case). However, if little to no corrections were necessary, we
are instantly done and do not have to recurse.

If you properly choose your pivots\footnote{\label{fragile}Frankly, this can be a bit fragile. Check
the pdqsort source code for the exact procedure used to select pivots, and you notice we actually
sort the pivot candidates and put the median one at the start position, which gets swapped back to
the middle in the event of a swapless partition. A small deviation from the reference implementation
here might lose you the linear time guarantee.} then this small optimistic heuristic will sort
inputs that are ascending, descending or ascending with an arbitrary element appended all in linear
time. We argue that these input distributions are vastly overrepresented in the domain of inputs to
sorting functions, and are well worth the miniscule overhead caused by false positives.

This overhead really is miniscule, as swapless partitions become exceedingly unlikely for large
arrays to happen by chance. A hostile attacker crafting worst-case inputs is also no better off. The
maximum overhead per partition is $\approx n$ operations, so it doubles performance at worst, but
this can already be done to pdqsort by crafting a worst case that degenerates to heapsort.
Additionally, the partial insertion sort is only triggered when the partition wasn't \textit{bad},
forcing an attacker to generate good quicksort progress if she intends to trigger repeated
misdiagnosed partial insertion sorts.

\subsection{Block partitioning}

One of the most important optimizations for a modern quicksort is Edelkamp and Wei\ss' recent work
on BlockQuicksort\cite{EdelkampW16}. Their technique gives a huge\footnote{50-80\% from our
benchmarks, for sorting small integers.} speedup by eliminating branch predictions during
partitioning. In pdqsort it is only applied for \texttt{partition\_right}, as the code size is
significant and \texttt{partition\_left} is rarely called.

Branch predictions are eliminated by replacing them with data-dependent moves. First some static
block size is determined\footnote{In our implementation we settled on a static 64 elements, but the
optimal number depends on your CPU and cache architecture as well as the data you're sorting.}.
Then, until there are fewer than \texttt{2*bsize} elements remaining, we repeat the following
process.

We look at the first \texttt{bsize} elements on the left hand side. If an element in this
block is bigger or equal to the pivot, it belongs on the right hand side. If not, it should keep its
current position. For each element that needs to be moved we store its offset in \texttt{offsets\_l}.
We do the same for \texttt{offsets\_r}, but now for the last \texttt{bsize} elements, and
finding elements that are strictly less than the pivot:
\begin{lstlisting}[language=C++,basicstyle=\ttfamily\footnotesize,breaklines=true,keywordstyle=\bfseries,columns=flexible,escapechar=@]
int num_l = 0;                             int num_r = 0;                         
for (int i = 0; i < bsize; ++i) {          for (int i = 0; i < bsize; ++i) { 
    if (*(l + i) >= pivot) {                   if (*(r - 1 - i) < pivot) {        
        offsets_l[num_l] = i;                      offsets_r[num_r] = i + 1;      
        num_l++;                                   num_r++;                       
    }                                          }                                  
}                                          }                                      
\end{lstlisting}
\noindent But this still contains branches. So instead we do the following:
\begin{lstlisting}[language=C++,basicstyle=\ttfamily\footnotesize,breaklines=true,keywordstyle=\bfseries,columns=flexible,escapechar=@]
int num_l = 0;                             int num_r = 0;                    
for (int i = 0; i < bsize; ++i) {          for (int i = 0; i < bsize; ++i) { 
    offsets_l[num_l] = i;                      offsets_r[num_r] = i + 1;     
    num_l += *(l + i) >= pivot;                num_r += *(r - 1 - i) < pivot; 
}                                          }                                 
\end{lstlisting}
\noindent This contains no branches. Now we can \textbf{unconditionally} swap elements from the offset buffers:
\begin{lstlisting}[language=C++,basicstyle=\ttfamily\footnotesize,breaklines=true,keywordstyle=\bfseries,columns=flexible,escapechar=@]
for (int i = 0; i < std::min(num_l, num_r); ++i) {
    std::iter_swap(l + offsets_l[i], r - offsets_r[i]);
}
\end{lstlisting}
Notice that we only swap \texttt{std::min(num\_l, num\_r)} elements, because we need to pair
each element that belongs on the left with an element that belongs on the right. Any leftover
elements are re-used in the next iteration\footnote{After each iteration at least one offsets buffer
is empty. We fill any buffer that is empty.}, however it takes a bit of extra code to do so. It is
also possible to re-use the last remaining buffer for the final elements to prevent any wasted
comparisons, again at the cost of a bit of extra code. For the full implementation\footnote{We skip
over many important details and optimizations here as they are more relevant to BlockQuicksort than
to pattern-defeating quicksort. The full implementation has loop unrolling, swaps elements using
only two moves per element rather than three and uses all comparison information gained while
filling blocks.} and more explanation we invite the reader to check the Github repository, and read
Edelkamp and Wei\ss' original paper.

The concept is important here: replacing branches with data-dependent moves followed by
unconditional swaps. This eliminates virtually all branches in the sorting code, as long as
the comparison function used is branchless. This means in practice that the speedup is limited to
integers, floats, small tuples of those or similar. However, it's still a comparison sort.
You can give it arbitrarily complicated branchless comparison functions (e.g. \texttt{a*c > b-c})
and it will work.

When the comparison function isn't branchless this method of partitioning can be slower. The C++
implementation is conservative, and by default only uses block based partitioning if the comparison
function is \texttt{std::less} or similar, and the elements being sorted are native numeric types.
If a user wishes to get block based partitioning otherwise it needs to be specifically requested.

\section{Experimental results}

\subsection{Methodology}

We present a performance evaluation of pattern-defeating quicksort with (\textsc{bpdq}) and without
(\textsc{pdq}) block partitioning, introsort from libstdc++'s implementation of \texttt{std::sort}
(\textsc{std}), Timothy van Slyke's C++ Timsort\cite{timsort} implementation\cite{tvanslyke}
(\textsc{tim}), BlockQuicksort (\textsc{bq}) and the sequential version of In-Place Super Scalar
Samplesort\cite{ips4o} (\textsc{is$^4$o}). The latter algorithm represents to our knowledge the
state of the art in sequential in-place comparison sorting for large amounts of data.

In particular the comparison with BlockQuicksort is important as it is a benchmark for the novel
methods introduced here. The code repository for BlockQuicksort defines many different versions of
BlockQuicksort, one of which also uses Hoare-style crossing pointers partitioning and Tukey's
ninther pivot selection. This version is chosen for the closest comparison as it most resembles our
algorithm. The authors of BlockQuicksort also proposed their own duplicate handling scheme. To
compare the efficacy of their and our approach we also chose the version of BlockQuicksort with it
enabled.

We evaluate the algorithms for three different data types. The simplest is \textsc{int}, which is a
simple 64-bit integer. However, not all data types have a branchless comparison function. For that
reason we also have \textsc{str}, which is a \texttt{std::string} representation of \textsc{int}
(padded with zeroes such that lexicographic order matches the numeric order). Finally to simulate
an input with an expensive comparison function we evaluate \textsc{bigstr} which is similar to
\textsc{str} but is prepended with $1000$ zeroes to artificially inflate compare time. An algorithm
that is more efficient with the number of comparisons it performs should gain an edge there.

The algorithms are evaluated on a variety of input distributions. Shuffled uniformly distributed
values (\textsc{uniform}: $A[i] = i$), shuffled distributions with many duplicates (\textsc{dupsq}:
$A[i] = i \bmod \lfloor \sqrt{n} \rfloor$, \textsc{dup8}: $A[i] = i^8 + n/2 \bmod n$, \textsc{mod8}:
$A[i] = i \bmod 8$, and \textsc{ones}: $A[i] = 1$), partially shuffled uniform distributions
(\textsc{sort50}, \textsc{sort90}, \textsc{sort99} which respectively have the first 50\%,
90\% and 99\% of the elements already in ascending order) and some traditionally notoriously bad
cases for median-of-3 pivot selection (\textsc{organ}: first half of the
input ascending and the second half descending, \textsc{merge}: two equally sized ascending arrays
concatenated). Finally we also have the inputs \textsc{asc}, \textsc{desc} which are inputs that
are already sorted.

The evaluation was performed on an AMD Ryzen Threadripper 2950x clocked at 4.2GHz with 32GB of RAM.
All code was compiled with GCC 8.2.0 with flags \texttt{-march=native -m64 -O2}. To preserve the
integrity of the experiment no two instances were tested simultaneously and no other resource
intensive processes were run at the same time. For all random shuffling a seed was deterministically
chosen for each size and input distribution, so all algorithms received the exact same input for the
same experiment. Each benchmark was re-run until at least 10 seconds had passed and for at least 10
iterations. The former condition reduces timing noise by repeating small instances many times
whereas the latter condition reduces the influence of a particular random shuffle. The mean number
of cycles spent is reported, divided by $n \log_2 n$ to normalize across sizes. In total the
evaluation program spent 9 hours sorting (with more time spent to prepare input distributions).

As the full results are quite large ($12\text{ distributions} \times 3\text{ data types} = 36$
plots), they are included in Appendix A.

\subsection{Results and observations}

First we'd like to note that across all benchmarks \textsc{pdq} and \textsc{bpdq} are never
significantly slower than their non-pattern defeating counterparts \textsc{std} and \textsc{bq}.
In fact, the only regression at all is $\approx 4.5\%$ for \textsc{pdq} v.s. \textsc{std} for
large instances of \textsc{uniform-int} (while being faster for smaller sizes and \textsc{bpdq}
being roughly twice as fast).

There is one exception, with \textsc{bq} beating \textsc{bpdq} for the \textsc{organ} and
\textsc{merge} distributions (and very slightly for \textsc{sort99}) with the \textsc{bigstr} data
type. It is unclear why exactly this happens here. Especially curious is that the former two
distributions were very clearly bad cases for \textsc{bq} with the \textsc{int} data type when we
compare the performance against \textsc{uniform}. But in \textsc{bigstr} the roles have been
reversed, \textsc{bq} is significantly \textit{faster} on \textsc{organ} and \textsc{merge} than it
is on \textsc{uniform}. So it's not that \textsc{bpdq} is slow here, \textsc{bq} is just
mysteriously fast. Regardless, we see that both \textsc{pdq} and \textsc{bpdq} maintain good
performance (similar to \textsc{uniform}) on these cases, effectively defeating the pattern.

With these observations it's safe to say that the heuristics used in pattern-defeating quicksort
come with minimal to no overhead.

Interesting to note is the behavior regarding cache. On this system with
\texttt{sizeof(std::string)} being $32$ we see a drastic change of slope for all quicksort based
algorithms in the \textsc{str} benchmark around $n = 2^{16}$, which is right when the $1.5$MB L1 cache
has filled up. It's odd to see that this shift in slope never happens for \textsc{int}, even when
the input size well exceeds any CPU cache. For \textsc{bigstr} the change of slope occurs slightly
earlier, at around $n = 2^{12}$. Timsort is seemingly barely affected by this, but the clear winner
in this regard is \textsc{is$^4$o}, which appears to be basically cache-oblivious.

We already knew this, but block partitioning isn't always beneficial. However even for types where
the comparison can be done without branches, it still isn't always faster. In cases where the branch
predictor would get it right nearly every time because the data has such a strong pattern (e.g. for
\textsc{pdq} in \textsc{merge-int}), the traditional partitioning can still be significantly faster.

It should come at no surprise that for every input with long ascending or descending runs Timsort is
in the lead. Timsort is based on mergesort, so it can fully exploit any runs in the data. We note
that while pdqsort \textit{defeats} the patterns, meaning it doesn't heavily slow down on
unfavorable patterns, it can't exploit these runs either.

In previous benchmarks however Timsort's constant factor was very high, making it significantly
slower for anything that does not have a pattern to exploit. Looking at the results now, we
congratulate Timothy van Slyke on his performant implementation, which is significantly more
competitive. Especially for bigger or harder to compare types Timsort is now an excellent choice.

Our scheme for handling equal elements is very effective. Especially when the number of equivalence
classes approaches $1$ (such as in \textsc{mod8} and \textsc{ones}) the runtime goes down
drastically, but even in milder cases such as \textsc{dupsq} we see that the pattern-defeating sorts
build a sizable lead over their counterparts when compared to their performance in \textsc{uniform}.

Finally we note that due to the optimistic insertion sort on swapless partitions pdqsort achieves an
incredible speedup for the the ascending and descending input case, rivalling Timsort.

\section{Conclusion and further research}

We conclude that the heuristics and techniques presented in this paper have little overhead, and
effectively handle various input patterns. Pattern-defeating quicksort is often the best choice of
algorithm overall for small to medium input sizes or data type sizes. It and other quicksort
variants suffer from datasets that are too large to fit in cache, where \textsc{is$^4$o} shines. The
latter algorithm however suffers from bad performance on smaller sizes, future research could
perhaps combine the best of these two algorithms.

\pagebreak

\bibliographystyle{splncs04} 
\bibliography{pdqsort}

\appendix
\section{Full experimental evaluation}

All experiments ran are included here, for all three data types. All plots for the same data type
share the same axes, and the plots for each data type are split into two figures, one per page.

\begin{figure}
\includegraphics[width=0.98\textwidth]{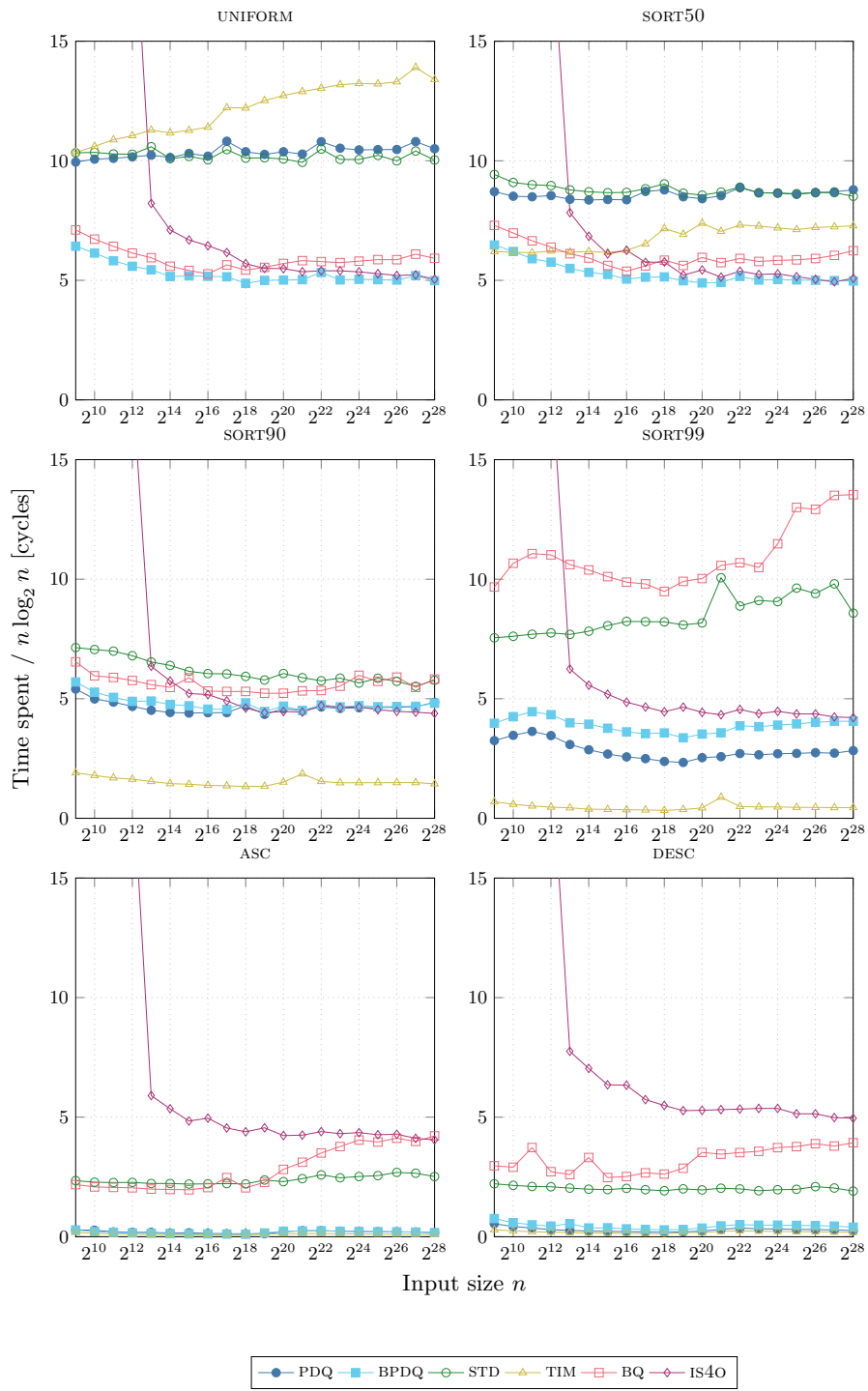}
\caption{\textsc{Int} benchmark.}
\end{figure}

\begin{figure}
\includegraphics[width=0.98\textwidth]{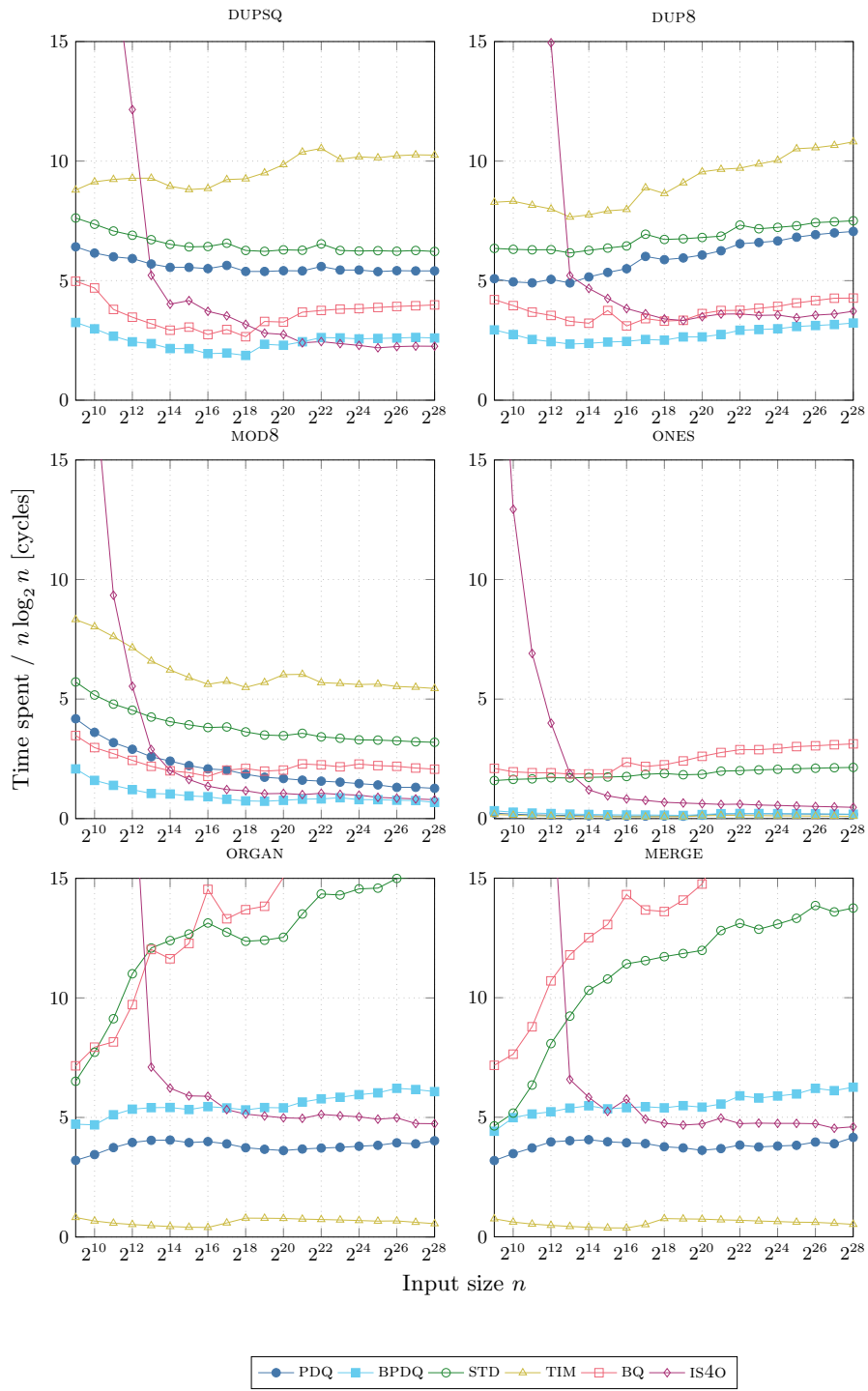}
\caption{\textsc{Int} benchmark, continued.}
\end{figure}

\begin{figure}
\includegraphics[width=0.98\textwidth]{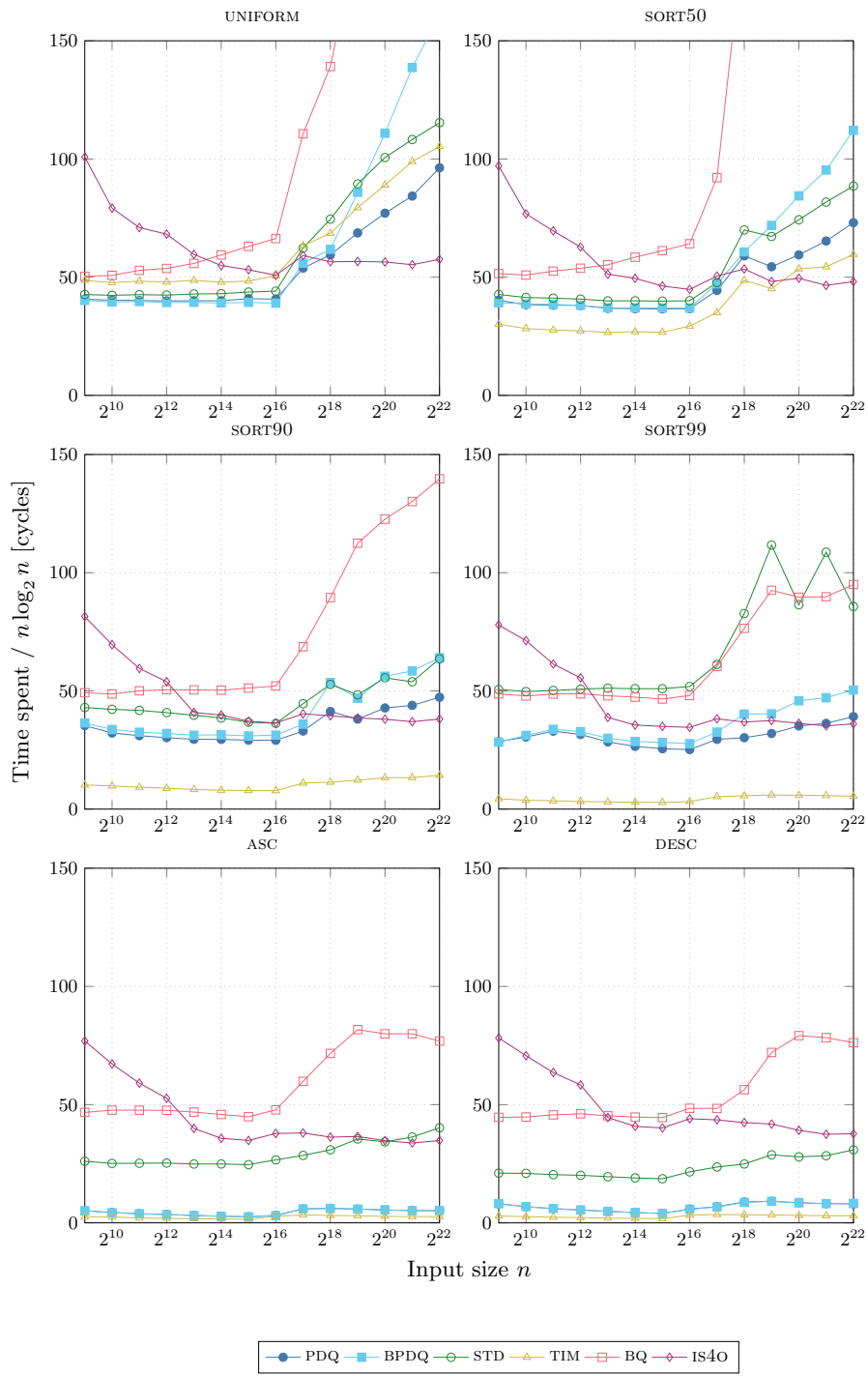}
\caption{\textsc{str} benchmark.}
\end{figure}

\begin{figure}
\includegraphics[width=0.98\textwidth]{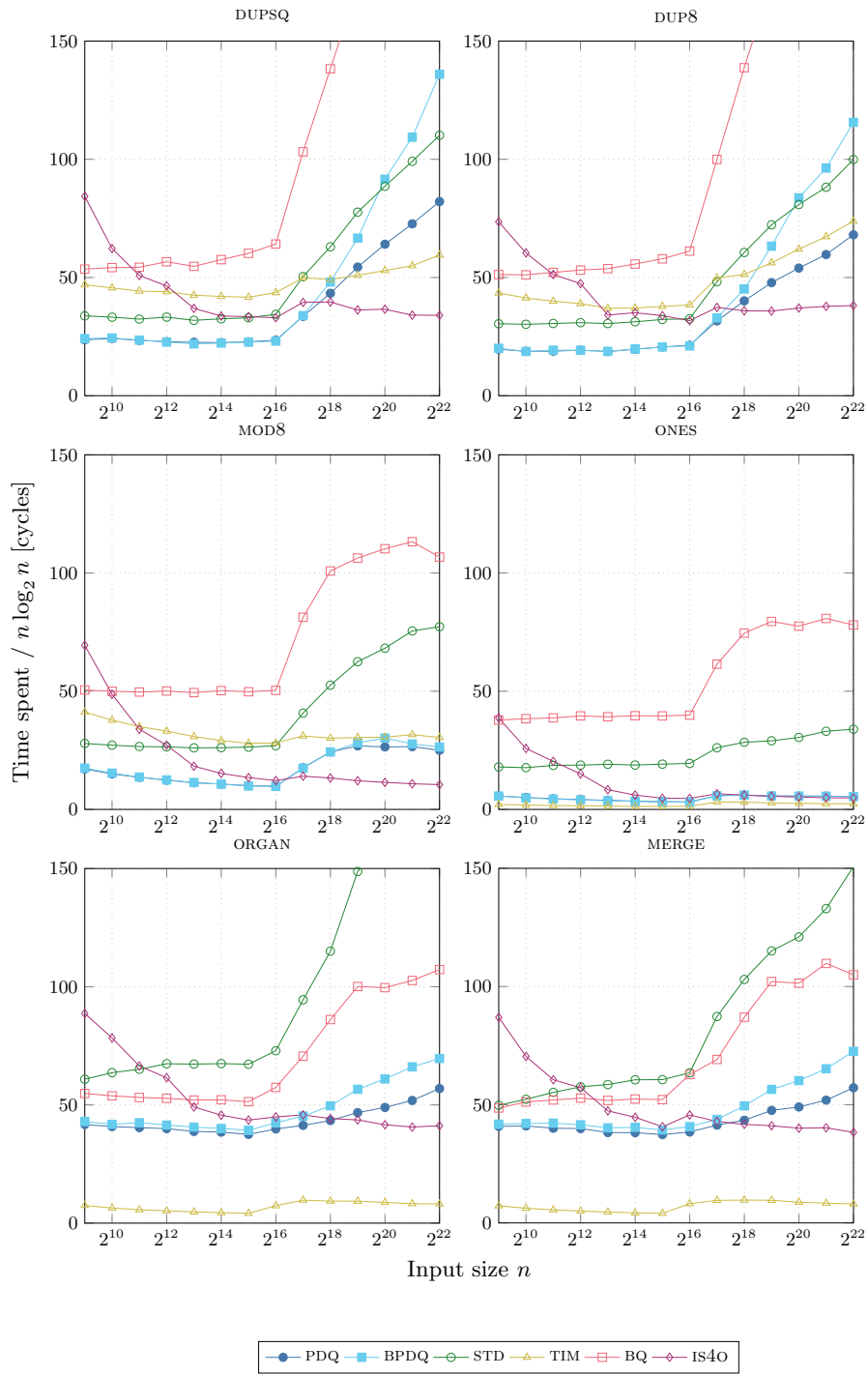}
\caption{\textsc{str} benchmark, continued.}
\end{figure}

\begin{figure}
\includegraphics[width=0.98\textwidth]{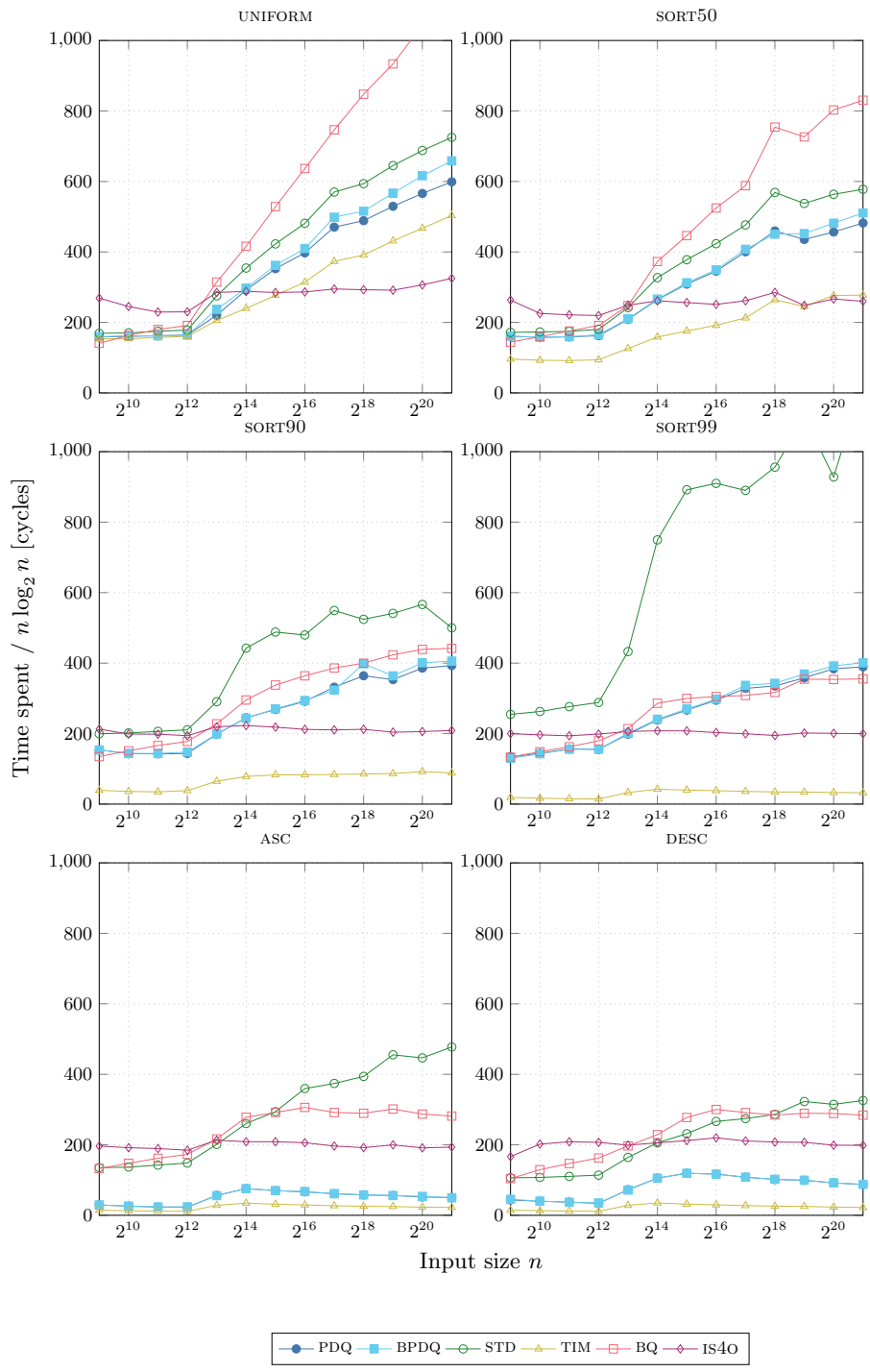}
\caption{\textsc{bigstr} benchmark.}
\end{figure}

\begin{figure}
\includegraphics[width=0.98\textwidth]{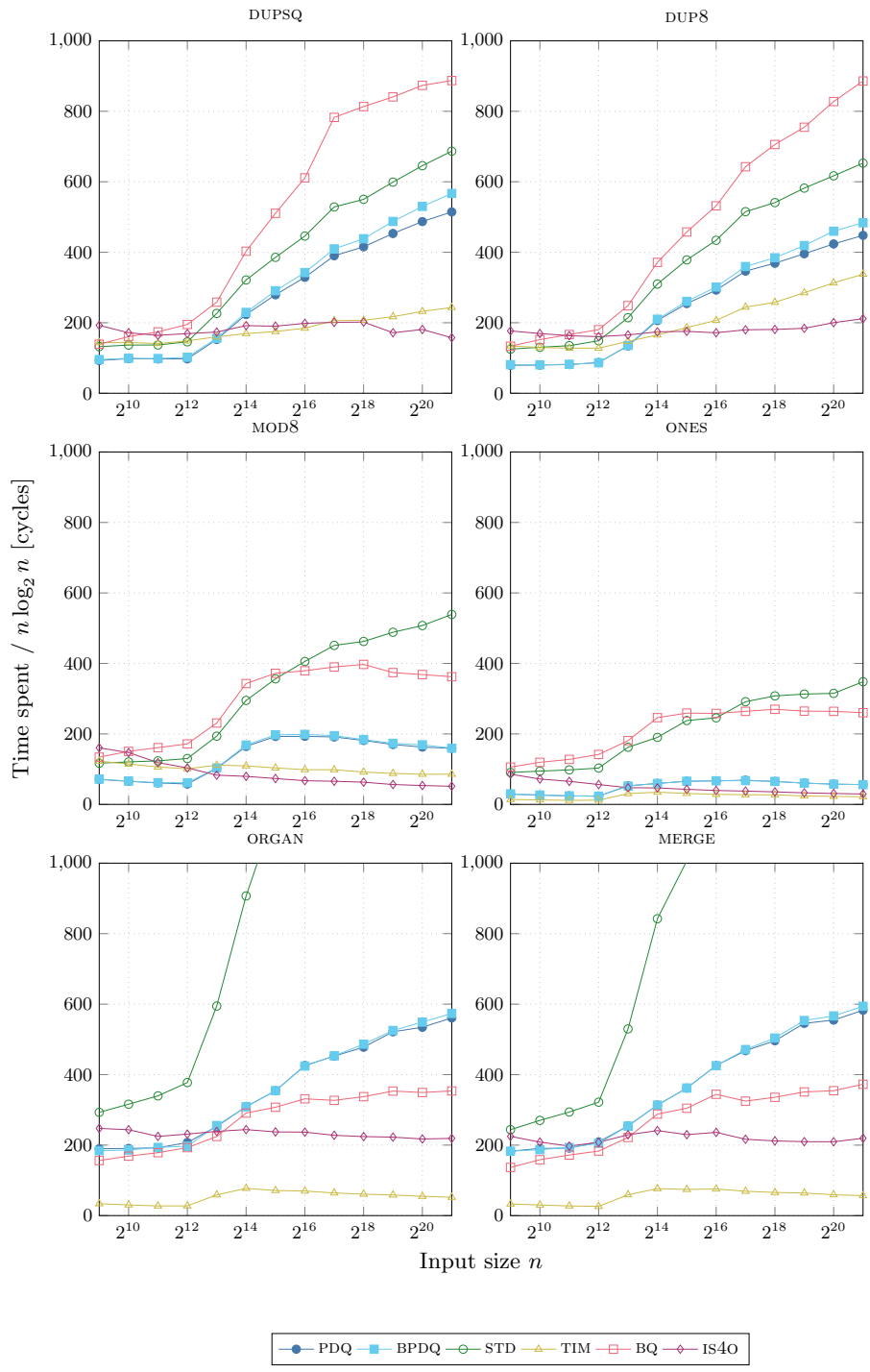}
\caption{\textsc{bigstr} benchmark, continued.}
\end{figure}

\end{document}